\newcommand{\textfrac}[2]{\textstyle{\frac{#1}{#2}}}
\newcommand{\D}{\EuScript{D}}
\newcommand{\E}{\mathrm{E}}
\newcommand{\J}{\mathrm{J}}
\let\mathcal\mathscr
\newtheorem{theorem}{Theorem}[section]
\newtheorem{definition}{Definition}
\newtheorem{remark}{\textsc{Remark}}
\let \kappa=\varkappa
\let \phi=\varphi
\newcommand{\ldb}{[\![}
\newcommand{\rdb}{]\!]}
\newcommand*{\eval}[1]{\left.#1\right|}
\newcommand*{\Ev}{\mathbf{E}}
\DeclareMathOperator{\CDiff}{\mathcal{C}Diff}
\newcommand{\abs}[1]{\vert#1\vert}
\begin{document}

\begin{frontmatter}

  \title{Integrability structures of the $(2+1)$-dimensional Euler equation}
  \tnotetext[t1]{The work of I.S.K.  was partially supported by the RSF Grant
    25-71-20008}

\author{I.S. Krasil'shchik}
\ead{josephkra@gmail.com}
\address{Trapeznikov Institute of Control   Sciences, 65 Profsoyuznaya street, 
\\
Moscow 117997, Russia}

\author{O.I. Morozov}   
\ead{oimorozov@gmail.com}
\address{Trapeznikov Institute of Control   Sciences, 65 Profsoyuznaya street, 
\\
Moscow 117997, Russia}

\begin{abstract}
  We construct a local variational Poisson structure (a Hamiltonian operator)
  for the $(2+1)$-dimensional Euler equation in vorticity form.  The inverse
  defines a nonlocal symplectic structure for the equation. We describe the
  action of this operator on the infinitesimal contact symmetries in terms of
  of differential coverings over the Euler equation. Furthermore, we construct
  a nonlocal recursion operator for cosymmetries. Finally, we generalize the
  local variational Poisson structure for the Euler equation in vorticity form
  on a two-dimensional Riemannian manifold.
\end{abstract}

\begin{keyword}
Euler equation in vorticity form \sep
variational Poisson structure \sep 
symmetry \sep 
cosymmetry \sep
differential covering

\MSC   35B06 \sep 35Q31 \sep 37K06 \sep 37K25 

\end{keyword}

\end{frontmatter}



\section{Introduction}

We consider the (2+1)-dimensional Euler equation in the vorticity
form~\cite[\S~10]{LandauLifshits6}
\begin{equation}
\Delta u_t = \J (u, \Delta u),
\label{Euler_eq}
\end{equation}
where $\Delta u = u_{xx} + u_{yy}$ and $\J(a,b) = a_x b_y-a_y b_x$.  This
equation is a cornerstone model in hydrodynamics and has been the focus of
extensive research; for more details, we refer to~\cite{ArnoldKhesin} and the
references therein, as well as to~\cite{KuznetsovMikhailov1980,Olver1982}.

To simplify computations in the subsequent sections, we perform the following
change of variables: we write Eq. \eqref{Euler_eq} as
$\tilde{u}_{\tilde{t}\tilde{x}\tilde{x}}+\tilde{u}_{\tilde{t}\tilde{y}\tilde{y}}
=
\tilde{u}_{\tilde{x}}\,(\tilde{u}_{\tilde{x}\tilde{x}\tilde{y}}+\tilde{u}_{\tilde{y}\tilde{y}\tilde{y}})
-\tilde{u}_{\tilde{y}}\,(\tilde{u}_{\tilde{x}\tilde{x}\tilde{x}}+\tilde{u}_{\tilde{x}\tilde{y}\tilde{y}})$
and then put $\tilde{t}=t$, $\tilde{x}=\frac{1}{2}\,(1+\mathrm{i})\,(x+y)$,
$\tilde{y} =-\frac{1}{2}\,(1-\mathrm{i})\,(x-y)$, and $\tilde{u}=u$, where
$\mathrm{i} =\sqrt{-1}$.  This yields the equation
\begin{equation}
\D u_t = \J (u, \D u),
\label{main_eq}
\end{equation}
where $\D = D_x\circ D_y$ is the composition of total derivatives.

In this paper, we construct a local variational Poisson structure (a
Hamiltonian operator) for Eq. \eqref{main_eq}. The inverse of this operator
defines a nonlocal symplectic structure for the equation. We describe the
action of this operator on the infinitesimal contact symmetries in terms of
pseudopotentials of differential coverings over the Euler
equation. Furthermore, we consruct a nonlocal recursion operator for
cosymmetries. Finally, we generalize the local variational Poisson structure
to the Euler equation in vorticity form on a two-dimensional Riemannian
manifold.

In what follows, we use the definitions and the techniques of 
\cite{KerstenKrasilshchikVerbovetskyVitolo2009,KrasilshchikVerbovetsky2011,KrasilshchikVerbovetskyVitolo2017}.


\section{Variational Poisson structures}
\label{sec:vari-bivect}

Consider an infinitely prolonged equation $\mathcal{E}\subset J^\infty(\pi)$,
where $\pi\colon E\to M$ is a vector bundle, see~\cite{VK1999}.  To simplify
exposition, we assume $M = \mathbb{R}^n$ and
$E = \mathbb{R}^m\times\mathbb{R}^n$ with the local coordinates
$x^1,\dots,x^n$ and $u^1,\dots,u^m$ in~$ \mathbb{R}^n$ and~$ \mathbb{R}^m$,
resp. Let~$\mathcal{E}$ be defined by the equation $F = 0$, where
$F = (F^1,\dots,F^m)$ is a vector-function on the space~$J^\infty(\pi)$. We
denote by~$P$ the $\mathcal{F}$-module of such functions,
where~$\mathcal{F} = C^\infty(J^\infty(\pi))$
(or~$C^\infty(\mathcal{E})$). Below,~$\kappa$ denotes the module of sections
of the pull-back~$\pi_\infty^*(\pi)$, where
$\pi_\infty\colon J^\infty(\pi)\to M$
and~$\hat{\bullet} = \hom_{\mathcal{F}}(\bullet, \Lambda_h^n)$; here
$\Lambda_h^n$ is the space of horizontal $n$-forms.

A $\mathcal{C}$-differential operator $\nabla\colon A\to B$, $A$ and~$B$ being
$\mathcal{F}$-modules, is an operator in total derivatives. The set of such
operators is denoted by $\CDiff(A, B)$. Such operators admit restriction
from~$J^\infty(\pi)$ to~$\mathcal{E}$. In particular~$\ell_F$, the
linearization of~$F\in P$, is a $\mathcal{C}$-differential operator and we set
$\ell_{\mathcal{E}} = \eval{\ell_F}_{\mathcal{E}}\colon\kappa\to P$. Similar
notation is used for the adjoint operator~$\ell_F^*$.

In what follows we assume that:
\begin{enumerate}
\item $\mathcal{E}$ is a smooth submanifold in~$J^\infty(\pi)$;
\item $\mathcal{E}$ surjectively projects by~$\pi$ onto~$J^0(\pi)$;
\item If an ``object'' (function, differential form,
  $\mathcal{C}$-differential operator, etc.) $O\in\mathcal{O}$, $\mathcal{O}$
  being some $\mathcal{F}$-module, admits a
  restriction~$\eval{O}_{\mathcal{E}}$ from~$J^\infty(\pi)$ to~$\mathcal{E}$
  and this restriction vanishes, then
  \begin{equation}
    \label{eq:1}
    O = \nabla(F), \qquad \nabla\in\CDiff(P, \mathcal{O});
  \end{equation}
\item The compatibility complex of~$\ell_{\mathcal{E}}$ is trivial.
\end{enumerate}

\begin{definition}
  Let~$\mathcal{E} = \{F = 0\}\subset J^\infty(\pi)$. Then
  $\mathcal{T}^*\mathcal{E} = \{F = 0,\ \ell_{\mathcal{E}}^*(p) = 0\}$ is
  called the cotangent equation of~$\mathcal{E}$, while
  $\mathrm{t}^*\colon \mathcal{T}^*\mathcal{E} \to \mathcal{E}$ is the
  cotangent covering. Here~$p = (p^1,\dots,p^m)$ is an odd variable of
  parity~$1$. Similarly,
  $\mathcal{T}\mathcal{E} = \{F = 0,\ \ell_{\mathcal{E}}(q) = 0\}$ is called
  the tangent equation of~$\mathcal{E}$, while
  $\mathrm{t}\colon \mathcal{T}\mathcal{E} \to \mathcal{E}$ is the tangent
  covering.
\end{definition}

\begin{remark}
  Due to oddness of the variables~$p_\sigma^j$ (or~$q_\sigma^j$), all the
  functions are polynomial in these variables, while their multiplication is
  anti-symmetric: $fg = (-1)^{\abs{f}\,\abs{g}}gf$, where~$\abs{\cdot}$
  denotes parity. This fact is used essentially in the proof of
  Theorems~\ref{Hamiltonian_operator_theorem} and~\ref{thm:hamilt-oper-euler}.
\end{remark}

\begin{definition}
  A $\mathcal{C}$-differential operator $H\colon \hat{P}\to\kappa$ is a
  (variational) bivector on~$\mathcal{E}$ if
  \begin{gather}
    \label{eq:2}
    \ell_{\mathcal{E}}(H) = 0,\\
    \label{eq:3}
    (\ell_{\mathcal{E}}\circ H)^* \equiv H^*\circ\ell_{\mathcal{E}}^* =
    \ell_{\mathcal{E}}\circ H.
  \end{gather} 
\end{definition}

\begin{remark}
  In coordinates, $H$ is of the form
  $H = (\sum_\sigma h_{ij}^\sigma D_\sigma)\in\CDiff(\hat{P}, \kappa)$, where
  $D_\sigma$ denotes the composition of total derivatives that corresponds to
  the multi-index~$\sigma$. Consider the function
  $H_u = (\sum_{\sigma,j} h_{ij}^\sigma p_\sigma^j)$
  on~$\mathcal{T}^*\mathcal{E}$. The expression~$\ell_{\mathcal{E}}(H)$ in
  Eq.~\eqref{eq:2} is to be understood as~$\ell_{\mathcal{E}}(H_u)$.
\hfill $\diamond$
\end{remark}

Denote by
$\Ev_\xi = \sum_{j,\sigma} D_\sigma(\xi^j)\frac{\partial}{\partial
  u_\sigma^j}$ the evolutionary field with the generating
section~$\xi\in\kappa$.

\begin{theorem}[cf.\ \cite{KrasilshchikVerbovetsky2011, KrasilshchikVerbovetskyVitolo2017}]
  Let $H$ be a bivector on~$\mathcal{E}$. Then there exists an element~$H_p$
  of parity~$2$ such that the evolutionary derivation~$\Ev_{\phi(H)}$ of
  parity~$1$ with the generating section~$\phi(H) = (H_u, H_p)$ is a symmetry
  of~$\mathcal{T}^*\mathcal{E}$.
\end{theorem}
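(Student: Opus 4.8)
The plan is to read off what ``symmetry of $\mathcal{T}^*\mathcal{E}$'' means and to see that only the $p$-component $H_p$ remains to be found. A generating section $\phi=(H_u,H_p)$ defines a symmetry of $\mathcal{T}^*\mathcal{E}=\{F=0,\ \ell_{\mathcal{E}}^*(p)=0\}$ iff it lies in the kernel of the linearization of this system restricted to $\mathcal{T}^*\mathcal{E}$. Letting $M$ be the $u$-linearization of $\ell_{\mathcal{E}}^*(p)$, so that $M(a)=\Ev_a(\ell_{\mathcal{E}}^*(p))$, this linearization is the block $\mathcal{C}$-differential operator
\[
\begin{pmatrix}\ell_{\mathcal{E}} & 0\\ M & \ell_{\mathcal{E}}^*\end{pmatrix},
\]
the lower-right block being $\ell_{\mathcal{E}}^*$ since $\ell_{\mathcal{E}}^*(p)$ is linear in the odd variable $p$. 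Thus the two conditions are $\ell_{\mathcal{E}}(H_u)=0$ and $\Ev_{H_u}(\ell_{\mathcal{E}}^*(p))+\ell_{\mathcal{E}}^*(H_p)=0$ on $\mathcal{T}^*\mathcal{E}$. The first is exactly hypothesis~\eqref{eq:2}; and since $H_u=H(p)$ is linear in $p$, the term $\Ev_{H_u}(\ell_{\mathcal{E}}^*(p))$ is quadratic in $p$, so the sought $H_p$ is of parity~$2$ and $\Ev_{\phi(H)}$ of parity~$1$, as stated. Everything reduces to solving the second equation for $H_p$.

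To construct $H_p$ I would pass to the canonical (variational) structure carried by the cotangent covering and exhibit $\Ev_{\phi(H)}$ as a Hamiltonian field. The natural Hamiltonian is the quadratic density $\mathbf{A}=\tfrac12\langle p,H(p)\rangle$ on $\mathcal{T}^*\mathcal{E}$. Here the oddness stressed in the Remark is essential: for commuting $p$ a skew operator would give $\langle p,H(p)\rangle\equiv0$, whereas the anti-commutativity $p_\sigma^i p_\tau^j=-p_\tau^j p_\sigma^i$ keeps $\mathbf{A}$ nontrivial. The Hamiltonian field of $\mathbf{A}$ has generating section $\bigl(\delta\mathbf{A}/\delta p,\,-\delta\mathbf{A}/\delta u\bigr)$, and I would set $H_p=-\delta\mathbf{A}/\delta u$, which is automatically quadratic in $p$. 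Conditions~\eqref{eq:2} and~\eqref{eq:3} are precisely what is needed for this field to be the one we want: \eqref{eq:3}, the self-adjointness $(\ell_{\mathcal{E}}\circ H)^*=\ell_{\mathcal{E}}\circ H$, forces $\delta\mathbf{A}/\delta p$ to reduce to $H(p)=H_u$ on $\mathcal{T}^*\mathcal{E}$ (skew-adjointness of $H$ modulo the cotangent relation), while~\eqref{eq:2} makes the $u$-component tangent to $\mathcal{E}$.

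With this choice the second condition of the first paragraph becomes the statement that the mixed variational derivatives of $\mathbf{A}$ agree, $\ell_{\mathcal{E}}^*(H_p)=-\Ev_{H_u}(\ell_{\mathcal{E}}^*(p))$, which is exactly the assertion that the Hamiltonian flow of $\mathbf{A}$ preserves the canonical structure and hence is a symmetry of $\mathcal{T}^*\mathcal{E}$. The step I expect to be the main obstacle is the rigorous bookkeeping that turns these on-shell equalities into honest $\mathcal{C}$-differential identities: showing that~\eqref{eq:3} genuinely yields $\delta\mathbf{A}/\delta p=H(p)$ over $\mathcal{E}$, and that $H_p$ is a bona fide element rather than a merely formal one. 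This is where Assumptions~(3) and~(4) are indispensable: Assumption~(3) lets me replace any object vanishing on $\mathcal{E}$ by $\nabla(F)$ and thereby absorb the $F$-proportional remainders, while Assumption~(4), triviality of the compatibility complex of $\ell_{\mathcal{E}}$, guarantees that the quantity found to lie in the image of $\ell_{\mathcal{E}}^*$ on-shell is actually hit by a $\mathcal{C}$-differential $H_p$. Carrying out this bookkeeping, and tracking the parities as above, completes the verification that $\phi(H)=(H_u,H_p)$ generates a parity-$1$ symmetry of $\mathcal{T}^*\mathcal{E}$.
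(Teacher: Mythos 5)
Your first paragraph sets the problem up correctly: by \eqref{eq:2} the $u$-component is fine, and everything reduces to solving $\Ev_{H_u}(\ell_{\mathcal{E}}^*(p))+\ell_{\mathcal{E}}^*(H_p)=0$ on $\mathcal{T}^*\mathcal{E}$. But the construction you propose for $H_p$ has a genuine gap at its central step: the claim that \eqref{eq:3} forces $\delta\mathbf{A}/\delta p$ to reduce to $H(p)$ on $\mathcal{T}^*\mathcal{E}$. Because $p$ is odd, $\langle p,S(p)\rangle$ is a total divergence for every \emph{self-adjoint} $S$, so the density $\mathbf{A}=\frac{1}{2}\langle p,H(p)\rangle$ sees only the skew part of $H$: one has $\delta\mathbf{A}/\delta p=\frac{1}{2}(H-H^*)(p)$. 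Condition \eqref{eq:3} asserts self-adjointness of the \emph{composition} $\ell_{\mathcal{E}}\circ H$; it does not give skew-adjointness of $H$ itself, not even modulo the cotangent relations. A concrete counterexample: for a scalar evolution equation $F=u_t-f$, take $H=\ell_F^*$. Then $\ell_F\circ H=\ell_F\circ\ell_F^*$ is self-adjoint identically, and $\ell_F(H(p))$ vanishes on $\mathcal{T}^*\mathcal{E}$ since $\ell_F^*(p)$ does together with all its total derivatives; so \eqref{eq:2} and \eqref{eq:3} hold and $H$ is a bivector --- the trivial one, with $H_u=\ell_F^*(p)\equiv 0$ on shell. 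Yet on $\mathcal{T}^*\mathcal{E}$ (where $p_t=-\ell_f^*(p)$) one computes $\frac{1}{2}(H-H^*)(p)=\frac{1}{2}(\ell_f+\ell_f^*)(p)\not\equiv 0$ unless $\ell_f$ is skew-adjoint. So your Hamiltonian field has the wrong $u$-component for this bivector. Your construction happens to work in the paper's application because $H=D_t-u_xD_y+u_yD_x$ \emph{is} skew-adjoint (and indeed then $-\delta\mathbf{A}/\delta u=\mp p_xp_y$ reproduces the paper's $H_p$ up to sign conventions); to make it work in general you would first need a normalization lemma stating that every bivector admits a representative with $(H+H^*)(p)$ vanishing on $\mathcal{T}^*\mathcal{E}$, which you neither state nor prove.

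There is a second, related gap: the field with generating section $(\delta\mathbf{A}/\delta p,\,-\delta\mathbf{A}/\delta u)$ is a symmetry of $\mathcal{T}^*\mathcal{E}$ --- the Euler--Lagrange equation of the odd Lagrangian $\langle p,F\rangle$ --- only when $\mathbf{A}$ is a \emph{conserved density} of $\mathcal{T}^*\mathcal{E}$; for arbitrary $\mathbf{A}$ tangency fails already in the $u$-component (e.g., $\mathbf{A}=\langle p,g(u)\rangle$ requires $\ell_{\mathcal{E}}(g)=0$). Deriving conservation of your $\mathbf{A}$ from \eqref{eq:2}--\eqref{eq:3} is essentially the content of the theorem, and you explicitly defer it as ``bookkeeping''; so the argument is circular at its crux. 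The paper avoids both issues by working off shell: by \eqref{eq:3} and Assumption~3 there is a bidifferential operator $\nabla$ with $\ell_F(H(p))-H^*(\ell_F^*(p))=\nabla(F,p)$ on the ambient $J^\infty(\hat\pi)$, and then $H_p=-\frac{1}{2}\,\eval{\nabla^{*_1}(p,p)}_{\mathcal{E}}$ is written down explicitly --- a Green's-formula construction valid for \emph{every} representative $H$, skew-adjoint or not, with no solvability appeal to Assumption~4 needed to produce $H_p$.
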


\begin{proof}
  It follows from Eq.~\eqref{eq:3} that
  \begin{equation*}
    \ell_F(H(p)) - H^*(\ell_F(p)) = \nabla(F, p),
  \end{equation*}
  on the ambient~$J^\infty(\hat{\pi})$, where
  $\nabla\colon \kappa\times\hat{P}\to P$ is a bi-differential operator in
  total derivatives. Then
  \begin{equation*}
    H_p = \eval{-\frac{1}{2}\nabla^{*_1}(p, p)}_{\mathcal{E}}
  \end{equation*}
  is the desired element. Here~$\nabla^{*_1}$ denotes the adjoint to~$\nabla$
  with respect to the first argument.
\end{proof}

\begin{remark}
  Due to Eq.~\eqref{eq:2}, the operator~$H$ sends symmetries of the
  equation~$\mathcal{E}$ to its cosymmetries. Then, for two conservation
  laws~$\omega$ and~$\omega'$, we can define their
  bracket~$\{\omega, \omega'\}_H$:
  \begin{equation*}
    \mathcal{L}_{\delta(\omega)}(\delta(\omega')) = \delta(\{\omega,
    \omega'\}_H), 
  \end{equation*}
  where~$\mathcal{L}$ is the Lie derivative and
  $\delta\colon E_1^{0,n-1}\to E_1^{1,n-1}$ is the differential in the first
  term of Vinogradov's $\mathcal{C}$-spectral sequence,~\cite{Vin-C-spec-DAN}.
  \hfill $\diamond$
\end{remark}

\begin{definition}
  For the two bivectors~$H$ and~$H'$, define their Schouten
  bracket  $\ldb H, H'\rdb$ by
  \begin{equation*}
    [\Ev_{\phi(H)}, \Ev_{\phi(H')}] = \Ev_{\ldb H, H'\rdb},
  \end{equation*}
  where~$[\cdot\,, \cdot]$ is the commutator of odd vector fields
  (anticommutator). A bivector is a Poisson structure if $\ldb H, H\rdb = 0$,
  i.e., if~$\Ev_{\phi(H)}$ is a nilpotent vector field. Two structures are
  compatible if~$\ldb H, H'\rdb = 0$.
\end{definition}

\begin{theorem}
  If~$H$ is a Poisson structure, then~$\{\cdot\,, \cdot\}_H$ is a Poisson
  bracket on the space of conservation laws, i.e., the bracket is
  skew-symmetric and enjoys the Jacobi identity.
\hfill $\Box$  
\end{theorem}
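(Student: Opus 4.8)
The plan is to read the hypothesis through the homological field $Q := \Ev_{\phi(H)}$ on the cotangent equation $\mathcal{T}^*\mathcal{E}$ and to treat the two assertions separately. By the definition of the Schouten bracket, the statement ``$H$ is a Poisson structure'' means $[\Ev_{\phi(H)}, \Ev_{\phi(H)}] = \Ev_{\ldb H, H\rdb} = 0$; since $\Ev_{\phi(H)}$ is an odd derivation, its self-anticommutator equals $2\,Q^2$, so the hypothesis is exactly $Q^2 = 0$, i.e. $Q$ is homological. By~\eqref{eq:2} the operator $H$ sends every cosymmetry $\delta(\omega)\in\hat{P}$ to a symmetry $H(\delta(\omega))\in\kappa$; throughout, $\mathcal{L}_{\delta(\omega)}$ denotes the Lie derivative along the evolutionary field $\Ev_{H(\delta(\omega))}$, so that the defining relation of the bracket reads $\delta(\{\omega,\omega'\}_H) = \mathcal{L}_{\delta(\omega)}(\delta(\omega'))$. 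Because $\delta$ is injective on conservation laws under the standing assumptions (assumption~4 on the compatibility complex of $\ell_{\mathcal{E}}$), it suffices to verify skew-symmetry and the Jacobi identity at the level of generating sections.

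For skew-symmetry I would use the anticommutativity of the odd variables $p$ emphasized in the Remark. Up to the horizontal differential the bracket is represented by the density $\langle\delta(\omega),\,H(\delta(\omega'))\rangle$, which is quadratic once the generating sections are written through $p$; this is the usual Hamiltonian form of the bracket and agrees with the Lie-derivative definition modulo horizontal exactness. Integration by parts replaces $H$ by its adjoint, while transposing the two odd factors produces a sign, so that $\langle\delta(\omega),\,H(\delta(\omega'))\rangle \equiv -\langle\delta(\omega'),\,H(\delta(\omega))\rangle$ modulo total divergences; only the skew part of $H$ survives, whence $\{\omega,\omega'\}_H = -\{\omega',\omega\}_H$. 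Condition~\eqref{eq:3} is precisely what guarantees that this density is closed with respect to the horizontal differential, i.e. that it genuinely defines a conservation law.

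The Jacobi identity is the main obstacle. The key step is the homomorphism identity
\begin{equation*}
  \Ev_{H(\delta(\{\omega,\omega'\}_H))} = [\Ev_{H(\delta(\omega))},\,\Ev_{H(\delta(\omega'))}],
\end{equation*}
asserting that $\omega\mapsto\Ev_{H(\delta(\omega))}$ carries the bracket of conservation laws to the commutator of evolutionary fields. I expect to prove it by a direct computation in total derivatives on $\mathcal{T}^*\mathcal{E}$, exploiting that the Hamiltonian field $\Ev_{H(\delta(\omega))}$ is produced from the homological field $Q$ and the generating section $\delta(\omega)$ by a derived-bracket construction: the difference of the two sides is then the value of the trivector $\ldb H, H\rdb$ on the pair $(\delta(\omega),\delta(\omega'))$, which vanishes by hypothesis. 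This is exactly where $Q^2 = 0$ enters, and the delicate part is the bookkeeping of odd-variable signs together with the modulo-divergence ambiguities.

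Granting the homomorphism identity, the Jacobi identity is formal. Combining it with $\mathcal{L}_{[X,Y]} = [\mathcal{L}_X,\mathcal{L}_Y]$ and applying both sides to $\delta(\omega'')$ yields the Leibniz form
\begin{equation*}
  \{\{\omega,\omega'\}_H,\omega''\}_H = \{\omega,\{\omega',\omega''\}_H\}_H - \{\omega',\{\omega,\omega''\}_H\}_H,
\end{equation*}
which together with the skew-symmetry established above is equivalent to the cyclic Jacobi identity. Transferring back from generating sections to conservation laws through the injective map $\delta$ then completes the argument.
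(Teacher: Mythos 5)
The paper itself gives no proof of this theorem --- it is stated with a closing $\Box$ and implicitly referred to the
literature~\cite{KerstenKrasilshchikVerbovetskyVitolo2009,KrasilshchikVerbovetsky2011}, where the standard argument runs exactly along the lines you sketch: lift everything to the cotangent covering, use the homological field $Q=\Ev_{\phi(H)}$ with $Q^2=0$, and realize the bracket as a derived bracket. So your overall architecture is the right one; the problem is that both of your substantive steps have gaps. First, the skew-symmetry argument is wrong as written: the generating sections $\delta(\omega),\delta(\omega')$ of conservation laws are \emph{even} elements of $\hat P$, so there are no ``two odd factors'' to transpose in $\langle\delta(\omega),H(\delta(\omega'))\rangle$, and the sign you need would have to come from skew-adjointness $H^*=-H$ --- which is \emph{not} among the hypotheses in this non-evolutionary framework. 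Condition~\eqref{eq:3} only intertwines $H$ with $\ell_{\mathcal{E}}$; for the paper's operator $H=D_t-u_x D_y+u_y D_x$ skew-adjointness happens to hold, but in general it must be extracted from~\eqref{eq:2}--\eqref{eq:3} through the covering formalism, not by naive integration by parts. Relatedly, your claim that~\eqref{eq:3} makes the density ``closed with respect to the horizontal differential'' is off target: a top-degree horizontal form is automatically closed; what actually needs checking is well-definedness of the bracket, i.e.\ that $\mathcal{L}_{\Ev_{H(\delta(\omega))}}\delta(\omega')$ lies in the image of $\delta$, which follows from $H(\delta(\omega))$ being a symmetry by~\eqref{eq:2} and the compatibility of $\delta$ with Lie derivatives along symmetries.

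Second, for the Jacobi identity you correctly isolate the key lemma --- that the Jacobi anomaly equals ``the value of the trivector $\ldb H,H\rdb$ on $(\delta(\omega),\delta(\omega'))$'' --- but you only assert it (``I expect to prove it by a direct computation''), and in this framework the assertion is not even well-formed without further work: the Schouten bracket is \emph{defined} here through the commutator of the odd evolutionary fields on $\mathcal{T}^*\mathcal{E}$, not as a classical trivector with argument slots, so ``evaluating'' $\ldb H,H\rdb$ on a pair of cosymmetries requires the dictionary between conservation laws and objects on the cotangent covering (e.g.\ representing $\omega$ by the density $\langle p,\delta(\omega)\rangle$ and exhibiting $\{\cdot\,,\cdot\}_H$ as the derived bracket of $Q$). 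Constructing that dictionary and verifying that $Q^2=0$ translates into the homomorphism identity is precisely the content of the theorem; once it is in place, your final formal step (the Leibniz form of Jacobi plus skew-symmetry, transferred back through $\delta$, whose injectivity is indeed legitimate under assumption~4 on the compatibility complex) is fine. As it stands, the proposal is a correct road map with its two load-bearing steps missing or misargued.
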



\section{Symmetries, cosymmetries, and coverings}

The Lie algebra of infinitesimal contact symmetries of Eq. \eqref{main_eq} is generated by the functions 
\[
\begin{array}{rcl}
\xi_1 &=& t\,u_t+u,
\\
\xi_2 &=& u_t,
\\
\xi_3 &=& x\,u_x-y\,u_y,
\\
\xi_4 &=& t\,x\,u_x-t\,y\,u-x\,y,
\\
\xi_5 &=& x\,u_x+y\,u_y-2\,u,
\\
\Xi_1 (A_1)&=& A_1(t)\,u_x-A_1^{\prime}(t)\,y,
\\
\Xi_2 (A_2)&=& A_2(t)\,u_y+A_2^{\prime}(t)\,x,
\\
\Xi_3 (A_3)&=& A_3(t),
\end{array}
\]
where $A_i(t)$ and $B_i(t)$ below are arbitrary smooth functions. The cosymmetries that depend on the second order jets have the form
\begin{equation}
\begin{array}{rcl}
\psi_1 &=& u,
\\
\psi_2 &=& x\,y,
\\
\Psi_1 (B_1)&=& B_1(t)\,x,
\\
\Psi_2 (B_2)&=& B_2(t)\,y,
\\
\Psi_3 (B_3)&=& B_3(t),
\\
\Upsilon&=& G(\D u),
\end{array}
\label{cosymmetries}
\end{equation}
where $B_i$ and $G$ are arbitrary smooth function of their arguments.

The tangent covering and the cotangent coverings of Eq. \eqref{main_eq} are obtained by appending the equations
\begin{equation}
\D q_t = \J (q, \D u)+\J(u, \D q)
\label{tangent_covering}
\end{equation}and
\begin{equation}
\D (p_t -\J(u,p)) + \J (\D u, p)=0,
\label{cotangent_covering}
\end{equation}
respectively.

Eq.~\eqref{main_eq} admits, in particular, the following family of
differential coverings~\cite{Morozov2024}
\begin{equation}
\left\{
\begin{array}{rcl}
s_t &=& \J(u, s)+\varepsilon\,\E(u),
\\
\J(\D u, s) &=& \lambda + \mu\,\D u-\varepsilon\,\E(\D u),
\end{array}
\right.
\label{s_covering}
\end{equation}
where $\E(\phi) = x\,D_x(\phi)+y\,D_y(\phi)-2\,\phi$ and
$\varepsilon, \lambda, \mu \in \mathbb{R}$.


\section{The local Hamiltonian structure and its inverse}

We have the following observation:
\begin{theorem}
  Let $p$ be a solution to Eq.~\eqref{cotangent_covering}, then
  $q = p_t - \J(u, p)$ is a solution to Eq.~\eqref{tangent_covering}.  \hfill
  $\Box$
\end{theorem}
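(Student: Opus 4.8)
The plan is to rewrite the cotangent covering equation directly in terms of the candidate $q = p_t - \J(u, p)$, differentiate it in $t$, and reduce the target equation to the Jacobi identity for the bracket~$\J$. First I would note that, since $q = p_t - \J(u, p)$ by definition, the cotangent covering equation~\eqref{cotangent_covering} reads simply $\D q + \J(\D u, p) = 0$, i.e.
\[
  \D q = -\J(\D u, p).
\]
This is the crucial reformulation: it expresses $\D q$ in closed form and is the only place where the hypothesis on $p$ enters.

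Next I would apply $D_t$ to this relation. Because $\D = D_x\circ D_y$ and $D_t$ are all total derivatives, $D_t$ commutes with~$\D$, so the left-hand side becomes $\D q_t$. On the right-hand side, $D_t$ acts as a derivation in each argument of the bilinear first-order form $\J$, giving $\D q_t = -\J((\D u)_t, p) - \J(\D u, p_t)$. I would then substitute the main equation~\eqref{main_eq} in the form $(\D u)_t = \J(u, \D u)$ and eliminate $p_t$ using $p_t = q + \J(u, p)$.

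After expanding and using antisymmetry $\J(a,b) = -\J(b,a)$, the summand $-\J(\D u, q)$ becomes $\J(q, \D u)$, matching the first term of the tangent covering equation~\eqref{tangent_covering}; similarly $\J(u, \D q) = -\J(u, \J(\D u, p))$. Thus the desired identity~\eqref{tangent_covering} collapses to
\[
  \J(\J(u, \D u), p) + \J(\D u, \J(u, p)) = \J(u, \J(\D u, p)),
\]
which is precisely the Jacobi identity for $\J$ evaluated on the triple $(u, \D u, p)$, rewritten by antisymmetry. The main obstacle is entirely organizational: carefully tracking the signs produced by the antisymmetry of $\J$ so that the three nested-bracket terms reassemble into Jacobi. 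There is no analytic difficulty, since $\J$ is the canonical Poisson bracket in the variables $x, y$ and satisfies the Jacobi identity identically.
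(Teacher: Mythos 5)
Your proof is correct: the reformulation $\D q = -\J(\D u, p)$, the application of $D_t$ (legitimate since total derivatives commute, so $D_t$ commutes with $\D$ and acts as a derivation on $\J$), the substitution of the main equation and of $p_t = q + \J(u,p)$, and the final reduction to the Jacobi identity $\J(u,\J(\D u,p)) = \J(\J(u,\D u),p) + \J(\D u,\J(u,p))$ all check out, with the signs matching. The paper states this theorem as an observation and omits the proof entirely (it is the evident direct computation), so your argument is precisely the intended verification rather than a different route.
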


Substituting for $p_t=\J(u,p) +q$ into Eq.~\eqref{cotangent_covering} gives
the system
\begin{equation}
\left\{
\begin{array}{rcl}
p_t &=& \J(u, p)+q,
\\
\J(\D u, p) &=& -\D q.
\end{array}
\right.
\label{Backlund_transformation}
\end{equation}
The compatibilty conditions of this system with respect to $p$ and $q$
coincide with Eqs.~\eqref{tangent_covering} and~\eqref{cotangent_covering},
respectively, so we have the following result:
\begin{theorem}
  System~\eqref{Backlund_transformation} defines a B{\"a}cklund transformation
  between the tangent covering and the cotangent covering of
  Eq.~\eqref{main_eq}.  \hfill $\Box$
\end{theorem}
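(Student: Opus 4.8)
The plan is to verify that the system~\eqref{Backlund_transformation} is a genuine Bäcklund transformation, which means checking two things: first, that the compatibility condition for the variable~$p$ (obtained by cross-differentiating the two relations defining~$p$ in terms of~$q$ and~$u$) reproduces exactly the cotangent covering equation~\eqref{cotangent_covering}; and second, that the compatibility condition for~$q$ reproduces the tangent covering equation~\eqref{tangent_covering}. Since the statement asserts precisely that these two compatibility conditions ``coincide'' with the respective coverings, the proof is essentially a direct computation, and the structure is symmetric in~$p$ and~$q$.

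\medskip

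First I would treat the system as prescribing the evolution and a constraint. The first relation $p_t = \J(u,p)+q$ expresses the $t$-derivative of~$p$, while the second, $\J(\D u,p)=-\D q$, is a spatial constraint linking~$p$ and~$q$. To extract the compatibility condition on~$q$ (i.e.\ to recover the tangent covering), I would eliminate~$p$: apply the operator~$\D = D_x\circ D_y$ to the first equation and use the fact that, on solutions of the main equation~\eqref{main_eq}, $\D u_t = \J(u,\D u)$. Then I would substitute the constraint $\J(\D u,p)=-\D q$ to replace the $p$-dependent terms. The Leibniz-type identity for~$\J$ under total derivatives, namely $\D\J(u,p) = \J(\D u,p)+\J(u,\D p)+(\text{mixed bracket terms})$, is the algebraic engine here; after substitution the $p$-terms should organize into~$\J(u,\D q)+\J(q,\D u)$, which is exactly the right-hand side of~\eqref{tangent_covering}.

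\medskip

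Symmetrically, to recover the cotangent covering I would eliminate~$q$ instead: differentiate the constraint $\J(\D u,p)=-\D q$ in~$t$ and substitute $\D q$ and its $t$-derivative using the first relation $q=p_t-\J(u,p)$, again invoking $\D u_t=\J(u,\D u)$ to handle the $u$-dependence. Collecting terms should yield~$\D(p_t-\J(u,p))+\J(\D u,p)=0$, which is precisely~\eqref{cotangent_covering}. This direction is already half-done by the preceding remark in the paper, since~\eqref{Backlund_transformation} was obtained by substituting $p_t=\J(u,p)+q$ into~\eqref{cotangent_covering}; the content is to confirm that the substitution is reversible and that no spurious constraints appear.

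\medskip

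The main obstacle I anticipate is bookkeeping rather than conceptual: the bracket~$\J$ is bilinear and skew but not a derivation in the naive sense, so commuting~$\D$ past~$\J$ produces several correction terms, and one must track them carefully to see that everything cancels against the constraint. The essential point to verify is that the two coverings are genuinely the $p$- and $q$-compatibility conditions of the \emph{same} first-order system, so that a solution of one covering lifts to a solution of the system and projects to a solution of the other; this is what makes~\eqref{Backlund_transformation} a Bäcklund transformation rather than merely a pair of coincidentally related equations. Establishing this correspondence of solution sets in both directions completes the proof.
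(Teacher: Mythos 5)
Your overall strategy---reading \eqref{Backlund_transformation} as a first-order system whose compatibility conditions in $p$ and $q$ must coincide with the two coverings---is exactly the paper's argument (the paper in fact gives no more detail than this assertion). However, the concrete computation you propose for recovering the tangent covering would fail. Applying $\D$ to the evolution equation $p_t=\J(u,p)+q$ produces $\D p_t$ together with the full third-order expansion of $\D\,\J(u,p)$, i.e.\ many derivatives of $p$, whereas the single scalar constraint $\J(\D u,p)=-\D q$ only controls the one combination $\D u_x\,p_y-\D u_y\,p_x$; no ``Leibniz-type identity'' for $\D$ over $\J$ will organize the leftover $p$-terms into $\J(q,\D u)+\J(u,\D q)$. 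The correct manipulation goes the other way around: differentiate the \emph{constraint} with respect to $t$ (total derivatives commute, so $D_t\,\J(\D u,p)=\J(\D u_t,p)+\J(\D u,p_t)$), substitute $p_t$ from the first equation and $\D u_t=\J(u,\D u)$ from \eqref{main_eq}, and then invoke the \emph{Jacobi identity} for the bracket $\J$---the key algebraic input absent from your proposal---to collapse the double brackets:
\begin{equation*}
\J(\J(u,\D u),p)+\J(\D u,\J(u,p))=\J\bigl(u,\J(\D u,p)\bigr)=-\J(u,\D q),
\end{equation*}
whence $-\D q_t=-\J(u,\D q)+\J(\D u,q)$, which is precisely \eqref{tangent_covering}.

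Two smaller points. First, for the cotangent direction you overcomplicate: since $q$ is given explicitly by the first equation, no $t$-differentiation is needed; substituting $q=p_t-\J(u,p)$ into the constraint returns \eqref{cotangent_covering} identically---this is how the system was derived, as you yourself note. Second, your opening paragraph swaps the pairing relative to the paper's statement: the compatibility condition for the \emph{existence} of $p$ is an equation on $q$, namely the tangent covering \eqref{tangent_covering}, while eliminating $q$ yields the cotangent covering \eqref{cotangent_covering} on $p$. With the $q$-direction repaired via the Jacobi identity as above, your concluding remark on the two-way correspondence of solution sets is exactly the content of the theorem.
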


Moreover, the following statement holds.

\begin{theorem}
\label{Hamiltonian_operator_theorem}
The mapping $H =D_t - u_x\,D_y +u_y\,D_x$, $H\colon p \mapsto q=p_t-\J(u,p)$
is a Hamiltonian operator on Eq.~\eqref{main_eq}.
\end{theorem}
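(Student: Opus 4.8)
The plan is to check that $H$ is a variational bivector in the sense of \eqref{eq:2}--\eqref{eq:3} and then that it is Poisson, $\ldb H,H\rdb=0$; the whole argument is organized around one structural identity. Writing $F=\D u_t-\J(u,\D u)$, a direct rewriting of the linearization gives the factorization
\begin{equation*}
  \ell_F = H\circ\D + \J(\D u,\,\cdot\,).
\end{equation*}
Both $H$ and $\J(\D u,\,\cdot\,)$ are skew-adjoint: every operator $\J(f,\,\cdot\,)=f_x D_y-f_y D_x$ is skew-adjoint and $D_t^{*}=-D_t$, so $H^{*}=-H$, while $\D^{*}=\D$; hence $\ell_F^{*}=-\D\circ H-\J(\D u,\,\cdot\,)$. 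I would also record the Lie-algebra relation $[\J(f,\,\cdot\,),\J(g,\,\cdot\,)]=\J(\J(f,g),\,\cdot\,)$, which is nothing but the Jacobi identity for the bracket $\J$.

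Condition \eqref{eq:2}, $\ell_{\mathcal{E}}(H)=0$, states exactly that $H$ sends solutions of the cotangent covering \eqref{cotangent_covering} to solutions of the tangent covering \eqref{tangent_covering}, and this is precisely the first theorem of the present section, so nothing new is needed there. For condition \eqref{eq:3} I would exploit the factorization together with $H^{*}=-H$: one computes $\ell_F\circ H=H\D H+\J(\D u,\,\cdot\,)\circ H$, whereas $(\ell_F\circ H)^{*}=H^{*}\circ\ell_F^{*}=H\D H+H\circ\J(\D u,\,\cdot\,)$, so self-adjointness of $\ell_{\mathcal{E}}\circ H$ collapses to the single commutator identity $[H,\J(\D u,\,\cdot\,)]=0$ on $\mathcal{E}$. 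Using $D_t(\D u)=\D u_t$ and the Lie-algebra relation,
\begin{equation*}
  [H,\J(\D u,\,\cdot\,)]=\J(\D u_t,\,\cdot\,)-\J(\J(u,\D u),\,\cdot\,)=\J(F,\,\cdot\,),
\end{equation*}
which vanishes on $\mathcal{E}=\{F=0\}$. This shows $H$ is a bivector.

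For the Poisson property I would run the construction behind the theorem of Section~\ref{sec:vari-bivect}. The same commutator produces the bi-differential operator $\nabla$ in $\ell_F(H(p))-H^{*}(\ell_F^{*}(p))=\nabla(F,p)$, namely $\nabla(F,p)=-\J(F,p)$, and then $H_p=-\tfrac12\,\nabla^{*_1}(p,p)\big|_{\mathcal{E}}$. Here the oddness of $p$ is essential: $\nabla^{*_1}(p,p)=p_y p_x-p_x p_y=-2\,p_x p_y$ by the anticommutativity of the odd variables, so $H_p=p_x p_y$ and $\phi(H)=(p_t-\J(u,p),\,p_x p_y)$. It then remains to verify $\ldb H,H\rdb=0$, that is, the nilpotency $\Ev_{\phi(H)}^{2}=0$ of the odd evolutionary field, equivalently that $\Ev_{\phi(H)}$ annihilates its own generating section $\phi(H)$ on $\mathcal{T}^{*}\mathcal{E}$.

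This last step is the main obstacle. I expect it to succeed because the $u$-dependence of $H$ is linear, so the structure is of Lie--Poisson type and the Schouten bracket should reduce to the Jacobi identity for $\J$, once again in combination with $p_x p_y=-p_y p_x$. Concretely, I would expand $\Ev_{\phi(H)}$ applied to the two components $p_t-\J(u,p)$ and $p_x p_y$ of $\phi(H)$ on $\mathcal{T}^{*}\mathcal{E}$, eliminating $t$-derivatives through $\D u_t=\J(u,\D u)$, and exhibit the cancellations; these are exactly the ones dictated by the Jacobi identity for $\J$. This is the one point where a genuine, though short, odd-variable computation cannot be circumvented.
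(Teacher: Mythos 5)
Your proof is correct, and its core coincides with the paper's: you arrive at exactly the paper's bi-differential operator $\nabla(F,p)=-\J(F,p)$, at $\nabla^{*_1}(p,p)=-2\,p_xp_y$ via anticommutativity of the odd variables, hence at the same generating section $\phi(H)=(p_t-\J(u,p),\,p_xp_y)$, and you reduce $\ldb H,H\rdb=0$ to nilpotency of $\Ev_{\phi(H)}$ on $\mathcal{T}^*\mathcal{E}$, which is precisely what the paper checks. What you do differently is the bivector part: the paper never verifies \eqref{eq:2}--\eqref{eq:3} explicitly (condition \eqref{eq:2} is the unproved first theorem of the section, as you observe, and \eqref{eq:3} is only implicit in the formula for $\nabla$), whereas your factorization $\ell_F=H\circ\D+\J(\D u,\,\cdot\,)$, combined with $H^{*}=-H$, $\D^{*}=\D$ and $[\J(f,\cdot\,),\J(g,\cdot\,)]=\J(\J(f,g),\cdot\,)$, collapses both conditions to the single identity $[H,\J(\D u,\cdot\,)]=\J(F,\cdot\,)$, which vanishes on $\mathcal{E}$; this is cleaner than the paper's presentation, it makes transparent why everything hinges on the Jacobi identity for $\J$, and it transfers verbatim to the Riemannian case of Theorem~\ref{thm:hamilt-oper-euler} with the factor $\mathrm{e}^{-h}$ inserted. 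The one place you stop short of a complete proof is the final step, which you announce rather than carry out --- and in the paper that computation \emph{is} the whole proof. It does close exactly as you predict, in two lines: since $\J(a,b)=\J(b,a)$ for odd $a,b$, one may write $p_xp_y=\frac{1}{2}\,\J(p,p)$, and then $\Ev_{\phi(H)}^2(u)=D_t(p_xp_y)-\J(p_t,p)+\J(\J(u,p),p)-\frac{1}{2}\,\J(u,\J(p,p))=0$, the first pair cancelling by the Leibniz rule and the second by the graded Jacobi identity $\J(u,\J(p,p))=2\,\J(\J(u,p),p)$, while $\Ev_{\phi(H)}^2(p)=\frac{1}{2}\,\J(\J(p,p),p)=0$ is the Jacobi identity for three odd arguments. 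Write out these two lines (or the paper's equivalent coordinate computation) and your argument is complete.
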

\noindent
Proof. For $F = \D u_t - \J (u, \D u)$ we have
$\nabla (F,p)=-\J(F,p)=-(p_y\,D_x-p_x\,D_y)(F)$ and
$\nabla^{{*}_1}(F,p)=-(p_x\,D_y -p_y\,D_x)(F)$. Hence, we get
$\nabla^{{*}_1}(p,p)=-2\,p_xp_y$.  Since an evolutionary vector field commutes
with the total derivatives, we need to check that
$\mathbf{E}_{\varphi(H)}^2 (u) =0$ and $\mathbf{E}_{\varphi(H)}^2(p)=0$ for
the generator
$\varphi(H) =\displaystyle{\left(H(p),
    -\textfrac{1}{2}\,\nabla^{{*}_1}(p,p)\right)} = (p_t-\J(u,p), p_xp_y)$ of
a symmetry of Eqs. \eqref{main_eq}, \eqref{cotangent_covering}.  We have
$\mathbf{E}_{\varphi(H)}^2 (u) =\mathbf{E}_{\varphi(H)} (H(p)) = D_t(p_x
p_y)-D_x(H(p)) p_y-u_x\,D_y(p_x p_y) +D_y(H(p)) p_x +u_y\,D_x(p_x p_y)=0$ and
$\mathbf{E}_{\varphi(H)}^2 (p) =\mathbf{E}_{\varphi(H)} (p_x p_y) = D_x(p_x
p_y) p_y - p_x D_y(p_x p_y)=0$.  \hfill $\Box$

\begin{remark}
  The mapping $H^{-1} \colon q \mapsto p$ is a nonlocal symplectic structure
  on Eq. \eqref{main_eq}.  \hfill $\diamond$
\end{remark}

\begin{remark}
  The mapping $H$ also defines a Hamiltonian operator for
  Eq.~\eqref{Euler_eq}. The proof of this statement follows the proof of
  Theorem~\ref{Hamiltonian_operator_theorem}.  \hfill $\diamond$
\end{remark}

The action of the Hamiltonian operator $H$ on the cosymmetries
\eqref{cosymmetries} is given by the formulas
\[
\begin{array}{rcl}
H(\psi_1) &=& \xi_2,
\\
H(\psi_2) &=& -\xi_3,
\\
H(\Psi_1(A_1)) &=& \Xi_2(A_1),
\\
H(\Psi_1(A_2)) &=& -\Xi_1(A_2),
\\
H(\Psi_3(A_3)) &=& \Xi_3(\tilde{A}_3),
\\
H(\Upsilon) &=& 0,
\end{array}
\]
where $\tilde{A_3}$ is an antiderivative of $A_3$. These formulas imply the local expressions for the action of the inverse operator
$H^{-1}$ on the symmetries $\xi_2$, $\xi_3$, and $\Xi_i(A)$, while the action on 
$\xi_1$,  $\xi_4$, and $\xi_5$ has the form
\begin{equation*}
  \begin{array}{rcl}
    H^{-1}(\xi_1) &=&-s_1+t\,u,
    \\
    H^{-1}(\xi_4) &=& -s_2-t\,x\,y,
    \\
    H^{-1}(\xi_5) &=&-s_3,
  \end{array}
\end{equation*}
where $s_1$, $s_2$, and $s_3$ are solutions to Eqs. \eqref{s_covering} with
$(\varepsilon,\lambda,\mu)=(0,0,1)$, $(\varepsilon,\lambda,\mu)=(0,-1,0)$, and
$(\varepsilon,\lambda,\mu)=(-1,0,2)$, respectively.


\section{A nonlocal symplectic structure and
  recursion operator for
  co\-symmetries}

We linearize the covering \eqref{s_covering}, that is, replace  $u \mapsto u+\tau\,q$, $s \mapsto s+\tau\,r$, differentiate w.r.t. $\tau$
and then put $\tau = 0$. This yields  the system
\begin{equation}
\left\{
\begin{array}{rcl}
r_t &=& \J(u, r)+\J(q,s)+\varepsilon\,\E(q),
\\
\J(\D u, r) &=&-\J(\D q,s)+\mu\,\D q - \varepsilon\,\E(\D q),
\end{array}
\right.
\label{r_covering}
\end{equation}
which is compatible iff $u$ is a solution to Eq. \eqref{main_eq} and $q$ is a solution to Eq. \eqref{tangent_covering}.

\begin{theorem}
When $\lambda =0$ and $\mu = - 2\,\varepsilon$, a solution $s$ to Eqs. \eqref{s_covering} and the function
\begin{equation}
\EuScript{R}(q,r,s)=r -\frac{1}{2}\,\D q \cdot \left(\frac{s_x-\varepsilon\,y}{\D u_x} + \frac{s_y+\varepsilon\,x}{\D u_y}\right)
\label{nonlocal_symplectic_structure}
\end{equation}
are solutions to Eq. \eqref{cotangent_covering}.
\hfill $\Box$
\end{theorem}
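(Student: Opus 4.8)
The plan is to substitute each candidate directly into the cotangent covering and check that the result vanishes on the prolongation (where \eqref{main_eq}, \eqref{tangent_covering}, \eqref{s_covering}, \eqref{r_covering} all hold). Write the cotangent-covering operator as $L(p) = \D(p_t - \J(u,p)) + \J(\D u, p)$; since $u$ is fixed this $L$ is $\mathbb{R}$-linear in $p$, so the two assertions can be handled separately, and for $\EuScript{R}$ the $r$-term can be split off. The one structural identity used throughout is the operator commutator $[\D, \E] = 2\,\D$ (equivalently $\D\circ\E = \E\circ\D + 2\,\D$), which follows at once from $[D_xD_y,\, x D_x + y D_y] = 2\,D_xD_y$, using $[D_x,x]=[D_y,y]=1$.

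For $p = s$ the first equation of \eqref{s_covering} gives $s_t - \J(u,s) = \varepsilon\,\E(u)$, while the second, with $\lambda = 0$ and $\mu = -2\varepsilon$, gives $\J(\D u, s) = -2\varepsilon\,\D u - \varepsilon\,\E(\D u)$. Hence $L(s) = \varepsilon\bigl(\D\E(u) - 2\,\D u - \E(\D u)\bigr)$, and this vanishes by the commutator identity applied to $u$. This settles the first claim.

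For $p = \EuScript{R}$ I first record that, on a solution of \eqref{s_covering}, the second covering equation is exactly $\J(\D u, s) + 2\varepsilon\,\D u + \varepsilon\,\E(\D u) = 0$, which rearranges to $(s_x - \varepsilon y)\,\D u_y = (s_y + \varepsilon x)\,\D u_x$; thus the two fractions in \eqref{nonlocal_symplectic_structure} coincide on shell, and writing $\Phi$ for their common value we have $\EuScript{R} = r - \Phi\,\D q$. By linearity $L(\EuScript{R}) = L(r) - L(\Phi\,\D q)$. Computing $L(r)$ from \eqref{r_covering} (again with $\mu = -2\varepsilon$) and cancelling the pure-$\varepsilon$ part via the commutator identity applied to $q$ yields $L(r) = \D\bigl(\J(q,s)\bigr) - \J(\D q, s)$. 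It therefore remains to prove $L(\Phi\,\D q) = \D\bigl(\J(q,s)\bigr) - \J(\D q, s)$.

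To attack this last identity I would exploit two facts. First, the operator $M := D_t - \J(u,\cdot)$ is a derivation (both $D_t$ and the vector field $\J(u,\cdot) = u_xD_y - u_yD_x$ are), so it obeys the Leibniz and quotient rules, and the tangent covering \eqref{tangent_covering} reads simply $M(\D q) = \J(q, \D u)$. Second, differentiating $s_x = \varepsilon y + \Phi\,\D u_x$ and $s_y = -\varepsilon x + \Phi\,\D u_y$ and equating $s_{xy}=s_{yx}$ gives the consistency relation $\J(\Phi, \D u) = 2\varepsilon$, hence $\J(\D u, \Phi) = -2\varepsilon$. Expanding $L(\Phi\,\D q) = \D\bigl(M(\Phi\,\D q)\bigr) + \J(\D u, \Phi\,\D q)$ by Leibniz separates it into a part carrying $M(\Phi)$ and a part carrying $M(\D q) = \J(q,\D u)$, the $\J(\D u,\cdot)$ term contributing the constant $\J(\D u,\Phi)=-2\varepsilon$. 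The main obstacle is the remaining nonlocal term: one must compute $M(\Phi)$ from the quotient rule, using $M(s_x)$ and $M(\D u_x)$ obtained from \eqref{s_covering} and \eqref{main_eq} together with the commutators $[M, D_x]$ and $[M, D_y]$ (which are again Jacobian-type operators), and then verify that after applying the outer $\D$ and collecting all contributions — in particular organizing $\D=D_xD_y$ acting on products with the singular factor $1/\D u_x$ — everything collapses precisely onto $\D\bigl(\J(q,s)\bigr) - \J(\D q, s)$. This final bookkeeping is where the real labor lies; I expect it to succeed because the on-shell relations $\J(\Phi,\D u)=2\varepsilon$ and $M(\D q)=\J(q,\D u)$ are exactly the inputs needed to match the target.
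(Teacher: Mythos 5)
Your overall route coincides with the paper's: this theorem carries no written proof at all (the terminal $\Box$ marks a statement checked by direct computation, delegated to the {\sc Jets} system), and direct on-shell substitution into $L(p)=\D\bigl(p_t-\J(u,p)\bigr)+\J(\D u,p)$ is exactly what is intended. Everything you actually carry out is correct: $[\D,\E]=2\,\D$; hence $L(s)=\varepsilon\bigl(\D\E(u)-\E(\D u)-2\,\D u\bigr)=0$; the second equation of \eqref{s_covering} with $\lambda=0$, $\mu=-2\,\varepsilon$ is equivalent to $(s_x-\varepsilon\,y)\,\D u_y=(s_y+\varepsilon\,x)\,\D u_x$, so $\EuScript{R}=r-\Phi\,\D q$ with $\Phi$ the common value of the two fractions; and $L(r)=\D\J(q,s)-\J(\D q,s)$ after the $\varepsilon$-terms cancel by the same commutator identity. (Implicit hypotheses you correctly invoke: $q$ solves \eqref{tangent_covering}, $r$ solves \eqref{r_covering}, $\D u_x,\,\D u_y\neq 0$, and $s_{xy}=s_{yx}$ --- legitimate because the prolonged total derivatives in a covering commute, which is the source of your relation $\J(\Phi,\D u)=2\,\varepsilon$.)

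The one step you leave open --- computing $M(\Phi)$ for $M=D_t-u_x D_y+u_y D_x$ and verifying the collapse --- is where the attempt stops short of a proof, but it does close, and far more simply than your ``real labor'' warning suggests: on shell $M(\Phi)=0$, so no derivatives of the singular factor ever appear. Indeed, $M\circ D_x=D_x\circ M+u_{xx}\,D_y-u_{xy}\,D_x$ together with $M(s)=\varepsilon\,\E(u)$ and $M(y)=-u_x$ gives
\begin{equation*}
M(s_x-\varepsilon\,y)=u_{xx}\,(s_y+\varepsilon\,x)-u_{xy}\,(s_x-\varepsilon\,y)
=\Phi\,\bigl(u_{xx}\,\D u_y-u_{xy}\,\D u_x\bigr),
\end{equation*}
while $M(\D u)=0$ (Eq.~\eqref{main_eq}) and the same commutator give $M(\D u_x)=u_{xx}\,\D u_y-u_{xy}\,\D u_x$; the quotient rule then annihilates the numerator, $M(\Phi)=0$. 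Consequently, with your two inputs $M(\D q)=\J(q,\D u)$ and $\J(\D u,\Phi)=-2\,\varepsilon$,
\begin{equation*}
L(\Phi\,\D q)=\D\bigl(\Phi\,\J(q,\D u)\bigr)+\Phi\,\J(\D u,\D q)-2\,\varepsilon\,\D q.
\end{equation*}
On the other side, substituting $s_x=\varepsilon\,y+\Phi\,\D u_x$ and $s_y=-\varepsilon\,x+\Phi\,\D u_y$ yields $\J(q,s)=-\varepsilon\,\bigl(\E(q)+2\,q\bigr)+\Phi\,\J(q,\D u)$ and $\J(\D q,s)=-\varepsilon\,\bigl(\E(\D q)+2\,\D q\bigr)+\Phi\,\J(\D q,\D u)$, and one more application of $[\D,\E]=2\,\D$ gives $\D\J(q,s)-\J(\D q,s)=\D\bigl(\Phi\,\J(q,\D u)\bigr)+\Phi\,\J(\D u,\D q)-2\,\varepsilon\,\D q=L(\Phi\,\D q)$, i.e., $L(\EuScript{R})=L(r)-L(\Phi\,\D q)=0$. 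So your plan is sound, and, completed with the missing lemma $M(\Phi)=0$, it upgrades the paper's machine verification to a short human-readable proof.
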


\begin{remark}
When $s$ is fixed, the map $q \mapsto \EuScript{R}(q,r,s)$ is a nonlocal symplectic structure.
When $q$ is fixed, the map $s \mapsto \EuScript{R}(q,r,s)$ is a nonlocal recursion operator for cosymmetries.
Furthermore,  we have  
$
H(\EuScript{R}(q,r,s)) = - 2\,\varepsilon\,q. 
$
\hfill $\diamond$ 
\end{remark}

\begin{remark} For the Euler Eq.~\eqref{Euler_eq}, we replace $\D$ with
  $\Delta$ in \eqref{s_covering} and ~\eqref{r_covering} and modify formula
  \eqref{nonlocal_symplectic_structure} as follows:
  \begin{equation*}
    \EuScript{R}(q,r,s)=r -\frac{\Delta q}{(\Delta u_x)^2+(\Delta u_y)^2} \cdot 
    \left((s_x-\varepsilon\,y) \,\Delta u_x + (s_y+\varepsilon\,x)\,\Delta
      u_y\right). 
  \end{equation*}
  \hfill $\diamond$
\end{remark}

\section{A Hamiltonian operator for the Euler equation on a two-dimen\-sional
  Riemannian manifold}

The dynamics of inviscid incompressible fluid on a Riemannian manifold is
described by the Euler equation in vorticity form~\eqref{Euler_eq}, where the
Laplace operator
\begin{equation}
  \Delta u = \mathrm{e}^{-h}\, (u_{xx}+u_{yy})
  \label{isothermal_Laplacian}
\end{equation}
and the Jacobi bracket 
\begin{equation}
  \mathrm{J}(a,b) = \mathrm{e}^{-h}\,(a_x\,b_y-a_y\,b_x)
  \label{isothermal_Jacobi_bracket}
\end{equation}
are written in local isothermal coordinates $(x,y)$ such that the Riemannian
metric has the form $\mathrm{e}^{-h}\,(dx^2+dy^2)$ for a smooth function
$h=h(x,y)$,~\cite{Morozov2024b}. The proof of the following statement repeats
the proof of Theorem~\ref{Hamiltonian_operator_theorem}.
  
\begin{theorem}\label{thm:hamilt-oper-euler}
  The mapping $H =D_t - \mathrm{e}^{-h}\,(u_x\,D_y -u_y\,D_x)$ is a
  Hamiltonian operator on Eq.~\eqref{main_eq} with
  \eqref{isothermal_Laplacian}, \eqref{isothermal_Jacobi_bracket} and the
  generating function
  $\phi(H)=(p_t-\mathrm{e}^{-h}\,(u_x\,p_y -u_y\,p_x),\mathrm{e}^{-h}\,p_x
  p_y)$ of a symmetry of the cotangent covering.  \hfill $\Box$
\end{theorem}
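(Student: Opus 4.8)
The plan is to follow the recipe given in the proof of Theorem~\ref{Hamiltonian_operator_theorem} verbatim, since the statement explicitly asserts that the proof ``repeats'' it; the only real work is to verify that the extra factor $\mathrm{e}^{-h(x,y)}$, which now carries explicit $x,y$-dependence through the total derivatives, does not spoil the two nilpotency computations. First I would fix $F = \D u_t - \J(u,\D u)$ with the isothermal Jacobi bracket~\eqref{isothermal_Jacobi_bracket}, compute the bi-differential operator $\nabla(F,p)$ defined by $\ell_F(H(p)) - H^*(\ell_F(p)) = \nabla(F,p)$, and then form $\nabla^{*_1}(p,p)$. The key expectation is that because $\J$ is still antisymmetric and bilinear, one gets $\nabla(F,p) = -\J(F,p) = -\mathrm{e}^{-h}(p_y D_x - p_x D_y)(F)$ up to the $h$-dependent corrections, so that after taking the adjoint in the first argument and halving, the second component of the generating section comes out as $\mathrm{e}^{-h}\,p_x p_y$, exactly as quoted in the statement.

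Next I would verify the two identities $\Ev_{\phi(H)}^2(u) = 0$ and $\Ev_{\phi(H)}^2(p) = 0$ with $\phi(H) = (p_t - \mathrm{e}^{-h}(u_x p_y - u_y p_x),\ \mathrm{e}^{-h} p_x p_y)$. For the $u$-component I would expand $\Ev_{\phi(H)}(H(p))$ using $H = D_t - \mathrm{e}^{-h}(u_x D_y - u_y D_x)$, being careful that $D_x(\mathrm{e}^{-h}) = -h_x \mathrm{e}^{-h}$ and $D_y(\mathrm{e}^{-h}) = -h_y \mathrm{e}^{-h}$ now produce extra terms absent in the flat case. I expect the terms proportional to $h_x, h_y$ to cancel in pairs by the antisymmetry of the Jacobi structure, mirroring the cancellation $D_t(p_x p_y) - D_x(H(p))p_y - \dots = 0$ from Theorem~\ref{Hamiltonian_operator_theorem}; oddness of the $p_\sigma$ variables (so that $p_x p_x = p_y p_y = 0$ and $p_x p_y = -p_y p_x$) is what forces these cancellations, as the Remark following the cotangent-covering definition points out.

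The $p$-component check should be essentially a repeat of the flat computation: $\Ev_{\phi(H)}(\mathrm{e}^{-h} p_x p_y) = \mathrm{e}^{-h}\big(D_x(\mathrm{e}^{-h}p_x p_y)\,p_y - p_x\,D_y(\mathrm{e}^{-h}p_x p_y)\big)$, and every surviving monomial contains a repeated odd factor and hence vanishes; the $h_x, h_y$ contributions again die because they multiply $p_x p_y p_y$ or $p_x p_x p_y$. The main obstacle I anticipate is bookkeeping rather than conceptual: one must keep the derivatives of $\mathrm{e}^{-h}$ and the oddness sign rules straight simultaneously, and confirm that the $h$-gradient terms genuinely cancel rather than accumulating into a nonzero obstruction. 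If they did not cancel, the curvature of the metric would enter, so the nontrivial content of the theorem is precisely that the isothermal form of $\Delta$ and $\J$ (sharing the same conformal factor $\mathrm{e}^{-h}$) makes the Hamiltonian condition insensitive to $h$. I would therefore organize the computation to isolate the $h_x, h_y$ terms first and show their sum is zero, leaving the flat identities of Theorem~\ref{Hamiltonian_operator_theorem} to finish the argument.
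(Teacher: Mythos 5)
Your proposal is correct and follows exactly the route the paper intends: the paper gives no separate argument for Theorem~\ref{thm:hamilt-oper-euler} beyond the remark that its proof ``repeats the proof of Theorem~\ref{Hamiltonian_operator_theorem}'', which is precisely what you carry out, and your diligence in isolating the $h_x,h_y$ terms and showing they die against repeated odd factors ($p_xp_yp_y = p_xp_xp_y = 0$) correctly fills in the detail the paper leaves implicit. The computations check out as you anticipate, including the second component $\mathrm{e}^{-h}p_xp_y$ of the generating section and both nilpotency identities $\Ev_{\phi(H)}^2(u)=0$, $\Ev_{\phi(H)}^2(p)=0$.
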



\section{Acknowledgments}

Computations  were done using the {\sc Jets} software \cite{Jets}.


\bibliographystyle{amsplain}

\end{document}